\author{N. N. Vasiliev\thanks{St. Petersburg Department of V.\,A.\,Steklov Institute of Mathematics of the Russian Aca\-de\-my of Sciences, Saint Petersburg Electrotechnical University, \texttt{vasiliev@pdmi.ras.ru}.}
\ and D. A. Pavlov\thanks{Institute of Applied Astronomy of the Russian Academy of Sciences, \texttt{dpavlov@iaaras.ru}}}
\title{The computational complexity of the initial value problem
 for the three body problem}
\date{}
\newtheorem{Def}{Definition}
\newtheorem{Thm}{Theorem}
\newtheorem{Lem}{Lemma}
\begin{document}
\maketitle

{\small{\noindent The final publication is available at Springer via\\
\url{http://doi.org/10.1007/s10958-017-3407-3}}}
\begin{abstract}
\noindent
The paper is concerned with the computational complexity of the
initial value problem (IVP) for a system of ordinary dynamical
equations. Formal problem statement is given, containing a Turing
machine with an oracle for getting the initial values as real
numbers. It is proven that the computational complexity of the IVP for
the three body problem is not bounded by a polynomial. The proof is
based on the analysis of oscillatory solutions of the Sitnikov
problem that have complex dynamical behavior. These solutions
contradict the existence of an algorithm that solves the IVP in
polynomial time.
\end{abstract}

\section{Introduction}
The problem of numerical integration of ODE systems is undoubtedly
one of the most popular problems in applied mathematics. There exists a
huge number of algorithms and program packages for obtaining numerical
solutions of systems of differential equations originating from
math, physics, celestial mechanics and engineering. However,
there is little available research in the area of computational
complexity of the initial value problem itself
(some results are obtained in \cite{kawamura} and~\cite{reif}).
In most other works, complexity of particular algorithms is
analyzed, in terms of either the number of basic arithmetical
operations performed on each step, or the number of calls to
first or higher-order derivatives.

In this work, a formal statement is presented of the IVP for a system of ODEs.
In that statement, the input data for a problem will be:
the initial conditions, the point $t$ in time,
and a precision $\varepsilon$.
An algorithm is supposed to consume the input and produce
the output (an approximate state of the system at time $t$)
that matches the actual state of the system up to $\varepsilon$.
It must be noted that since the said statement includes real
numbers, we can not work with just data of finite length.
While it is sufficient to treat $t$ and $\varepsilon$ as rationals,
initial conditions are a different story: there is no prior
knowledge of how many digits in them will be sufficient to
ensure that the solution at $t>0$ will be obtained with precision
$\varepsilon$.

There are several approaches to work around that difficulty. The first
is to consider an infinite input tape (or several infinite tapes)
whose cells contain the digits of the initial conditions.
A Turing machine for the given IVP can read the digits on demand.
The second approach is to have a Turing machine use an oracle
that gives the needed digits on demand. The third approach
is to use a secondary Turing machine that prints out the digits
into the tape by request from the main Turing machine.

The third approach, as opposed to the first two, is that it would
limit us to just the constructive real numbers. In this work,
the second approach (with an oracle) is used. It differs from the
first one in the conventions of complexity analysis: the calls
to an oracle account for the time complexity of algorithm as a function
from the (finite) input length, while in the first approach
makes the input infinite, rendering the complexity analysis difficult.

Another obstacle in the formal statement of the problem is the
following: even if the derivatives in the system of ODEs are
known Lipschitz-continuous functions, the problem of existence
of the ODE solution at point $t$ can be undecidable.

We will show that even in the provably decidable case, the
complexity of the IVP can be non-polynomial. The proof is
based on the investigation of systems with complex dynamical
behavior. As a basic example, we will use the classical Sitnikov problem
for a three-body gravitational system, where two bodies follow
elliptic orbits on a plane, and the third body stays on the line perpendicular
to that plane. In the general case, the third body
does unending oscillations with arbitrary amplitudes.

Instead of the oscillating solution of Sitnikov problem, we could use other
dynamical systems exhibiting complex behavior, like, for instance, a
neighborhood of some homoclinic solution. Their computational complexity
would have turned non-polynomial, too. 

In this work, we do not use a natural representation of such
solution in the terms of symbolic dynamics~\cite{alexeyev}.
Rather, to prove the absence of a polynomial algorithm for our
formal IVP statement, it is sufficient to show that for a certain
neighborhood of initial conditions in phase space, the number of
algorithmically distinguishable trajectories is exponential in $t$.

\section{Turing machine for the initial value problem}\label{turingsec}

\def\LENGTH{\textrm{LENGTH}}
\def\xvec{\mathbf{x}}
\def\fvec{\mathbf{f}}
\def\R{\mathbb{R}}
\def\Q{\mathbb{Q}}
\def\N{\mathbb{N}}
\def\Z{\mathbb{Z}}
\def\d{\mathrm{d}}
\def\qvec{\mathbf{q}}
\def\pvec{\mathbf{p}}
\def\vvec{\mathbf{v}}
\def\P{\mathcal{P}}

We estimate the computational complexity of the initial value problem
for the dynamical system

\begin{equation}\label{ode}
\begin{array}{lcl}
  \dot \xvec & = & \fvec(\xvec) \\
  \xvec(0) & = & \xvec_0
\end{array}
\end{equation}
where $\xvec \in D$, $\xvec_0 \in D$ is a real vector, and
$\fvec: D \to \R^n$ is a computable real vector-valued function
(open set $D\subseteq \R^n$ is the phase space of the system).

This work deals with the case when the solution
$\xvec^*(t) : \R \rightarrow D$:
\begin{enumerate}
\item exists on the whole $\R$;
\item is unique;
\item is a computable real vector-valued function.
\end{enumerate}

Solutions that do not extend to $\R$ are called \textsl{singular}.
The problem of determining the singularity of a solution is undecidable
(see section \ref{known-nbody}).
Uniqueness of a solution, it it exists, is guaranteed given that
the function $\fvec$ is locally Lipschitz-continuous in every point in $D$.
(The proof of that fact can be found e.g. in~\cite[p. 15]{burke}.)
However, the local Lipschitz-continuity does not imply the existence
of the solution on $\R$.

If $\fvec$ is defined on $D$ when $D=\R^n$ and is (globally)
Lipschitz-continuous, then the solution on $\R$ does exist
for all $\xvec_0$ and is unique due to the Cauchy-Lipschitz theorem.

If $\fvec$ is continuous at every point in $D$, then every unique
solution is computable by a (non-practical) combinatorial
algorithm~\cite{collins}. In particular, that holds for any computable
$\fvec$, since every computable function is continuous.

In~\cite{repin}, it is proven that the solution of an IVP is computable
with a modification of Picard--Lindel\"of method, if $\fvec$
is Lipschitz-continuous on $D$. This important fact is quite non-trivial,
despite the existence of hundreds numerical integrators for ODE.
The vast majority of these integrators suffer from \textsl{saturation}:
the step size being small enough, the error grows upon further
decrease of the step size. Therefore, these
integrators can not in principle obtain a solution up to an arbitrary
precision~\cite{babenko}.

To summarize: with $D=\R^n$ and Lipschitz-continuous $\fvec$, the solution
of (\ref{ode}) with any $\xvec_0$ exists on $\R$, is unique and computable.
It follows independently from~\cite{collins} and~\cite{repin}. In both
sources, the computability is proven for the solution being
the function of $\xvec_0$ and $t$, rather than just $t$.

In this work, we limit ourselves with the study of a particular
instance of the three-body problem (see Section~\ref{sitnikov-problem}).
The subject for study is the asymptotic dependence of the computational
complexity of the solution $\xvec^*(t)$ on the value of $t$; the dependence
on the precision of $t$ is not considered. In the text that follows,
$t$ in the IVP is treated as rational, while $\xvec_0$ is a real vector.
The complexity analysis of another special case of IVP, where
$t \in \R$, is given in~\cite{kawamura}.

\begin{Def}\label{ivpfunc}
The solution function of an initial value problem (\ref{ode}) is
the function $S(\xvec_0, t): D \times \Q \to D$, where $S|_{\xvec=\xvec_0}: \Q \to D$ is a computable real vector-valued function, whose closure on the real
axis is the solution of (\ref{ode}).
\end{Def}

\begin{Def}\label{turing}
  Turing machine that computes the solution function of an IVP is
  a Turing machine that accepts rational $t$ and $\varepsilon$ as input;
  has an oracle $\varphi$ that instruments $\xvec_0$ as a computable
  real vector; and produces the value of the solution $\xvec(t)$ corresponding
  to given $\xvec_0$ and $t$, with the precision $\varepsilon$.
\end{Def}

It should be noted that in terms of complexity theory, the IVP belongs to
the class of \textsl{function problems}, as opposed to more studied
\textsl{decision problems}. The job of the oracle in the Turing machine 
is to write into its tape the representation of $\xvec_0$ up to
an arbitrary precision, specified by the machine itself. It is obvious
that the time required by the Turing machine includes the time
to read the oracle tape.

\begin{Def}
The IVP (\ref{ode}) has polynomial complexity if there exists a 
Turing machine from the definition \ref{turing} that computes
its solution function in time bounded by $\P(\LENGTH(t), \LENGTH(\varepsilon))$,
where $\P$ is an arbitrary polynomial.
\end{Def}

\noindent \textbf{Remark.} Without loss of generality, it can be assumed
that $\varepsilon = 2^{-l}$, hence $\LENGTH(\varepsilon) = l$.

\begin{Def}
  Suppose A and B are two IVPs. A is called polynomially reducible
  to B if there exist the following functions, computable in polynomial time:
  $G : D^{(A)} \to D^{(B)}$ and $H : D^{(B)} \to D^{(A)}$,
  so that for any initial state $\xvec_0^{(A)}\in D^{(A)}$
  and a corresponding solution $\xvec^{*(A)}(t)$ the following holds:
  $\xvec^{*(A)}(t) = H(\xvec^{*(B)}(t))$, where
  $\xvec^{*(B)}(t)$ is a solution of B with initial state
  $\xvec_0^{(B)} = G(\xvec_0^{(A)})$.
\end{Def}

\noindent \textbf{Statement.} If IVP A is polynomially reducible to IVP B,
and B has polynomial complexity, then A has polynomial complexity as well.

\section{Analysis of the computational complexity of the IVP for the three-body problem}
\subsection{$N$-body problem}
Gravitational $N$-body problem is concerned with the Newtonian motion of $N$
point-masses in three dimensions. The system of ODEs for this problem is
the following::
\begin{equation}\label{nbody}
\left. \begin{array}{rcl}
 \dot \pvec_i & = & \vvec_i, \quad i = 1..N\\
 \dot \vvec_i & = & \sum\limits_{\substack{j=1\\j\neq i}}^N\mu_j \frac{\pvec_j-\pvec_i}{|\pvec_j-\pvec_i|^3}, \quad i = 1..N
\end{array} \quad \right\} 
\end{equation}
where $\mu_i\in\R$, $\mu_i \geq 0$, $\pvec_i \in \R^3$, $\vvec_i \in \R^3$.

With $N=3$, the initial state of the system is given by a 21-vector
$\xvec_0=(\mu_1,\mu_2,\mu_3,p_{1,1},\ldots,p_{3,3},v_{1,1},\ldots,v_{3,3})$,
while the system (\ref{nbody}) defines a computable real vector-valued
function $\dot\xvec = {\mathbf f}(\xvec)$.
(The first three variables do not depend on $\xvec$ or $t$.)

\subsection{Known results}\label{known-nbody}
The classical two-body problem ($N=2$) has a solution in algebraic
functions of initial state and $t$. Depending on the configuration
of the system, the two bodies follow either a Keplerian orbit
(a parabola, hyperbola, or ellipse) or move along a line.
The detailed description of the solutions can be found in multiple sources.
Given those algebraic solutions, it is not difficult to show
that the IVP for a nonsingular two-body problem has polynomial complexity.

With $N=3$ the problem does not have a generic algebraic solution, as proven
by Poincar\'e. However, Sundman in 1912 derived a solution in the form of
converging series. Unfortunately, the estimate of the number of terms
required to calculate the series at point $t$ with a sensible precision
is exponential in $t$~\cite{beloriszky}. Merman improved
Sundman's result and found other series~\cite{merman}, though still
exponential in $t$.

In practical tasks related to the $N$-body problem (in particular, in ephemeris
astronomy) algorithms of numerical integration are used to obtain
approximate solutions. The time complexity of such algorithms
has a fundamental lower bound of $O(t)$, hence it can not be upper-bounded
by a polynomial of $\LENGTH(t)$.

The bottom line is that the known algorithms for the IVP for the three-body
problem are non-polynomial.  However, that does not disprove the polynomial
complexity of the problem.

\medskip

On a different note, let us show that there is a singular solution of the
$N$-body problem that has a nonsingular one in any neighborhood. Let $N=2$. Two
bodies collide if they are thrown upon each other along a straight line, while a
smallest deviation from the straight line will prevent the collision (if the
velocity is big enough). This implies the undecidability of the problem of
determination of singularity with computable real $\xvec_0$: it requires the
solution of equality relation of real numbers which does not exist.

\subsection{Sitnikov problem}\label{sitnikov-problem}
From now on, we will focus on a special case of the three-body problem,
where two of the bodies are of equal positive mass, while the third body
is massless and lies on a line, perpendicular to the plane of the motion
of the first two bodies and passing through their center of mass 
(Fig.~\ref{fig-sitnikov}). Hence, the two bodies follow the unperturbed
(Keplerian) orbit; in this problem, the elliptic orbit is the case.

Let us place the center of mass at the origin, and the $Z$ axis along
the line where the third body is. Let us denote $r(t)$ the distance
from the first body (and the second, as their trajectories are symmetric)
to the origin.

\begin{figure}[h]
  \centering
  \caption{Sitnikov problem}
  \vskip 3mm
  \includegraphics[width=\textwidth]{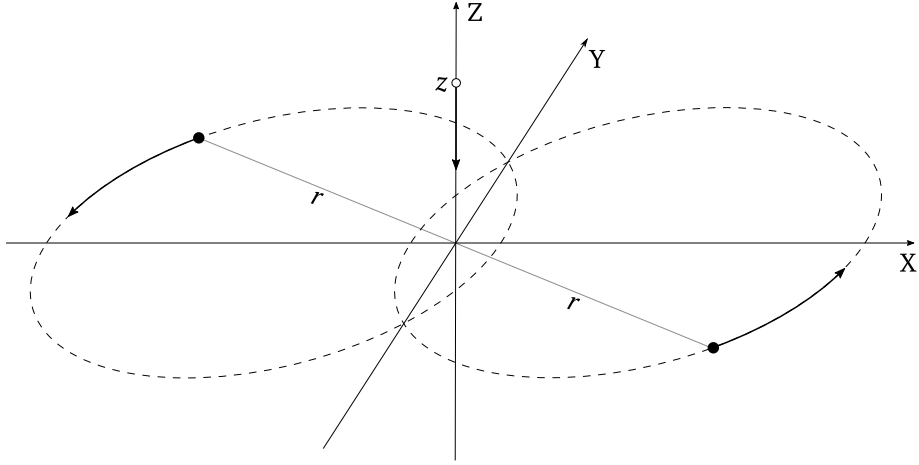}
  \label{fig-sitnikov}
\end{figure}

Following Newtonian laws (\ref{nbody}), the coordinate of the third body,
denoted as $z$, obeys the following differential equation:
\begin{equation}\label{sitnikov}
\ddot z = -\frac{2 \mu z}{\sqrt{z^2 + r(t)^2}^3},
\end{equation}

\noindent where $\mu$ is the gravitational constant of the first and second bodies.
Periodic function $r(t)$ comes from the solution of the two-body problem:

\begin{equation}\label{twobody}
 \begin{array}{rcl}
   r(t) &=& a(1-e \cos E(t)) \\ 
   E(t) - e \sin E(t) &=& \sqrt\frac{\mu}{a^3}(t - t_0)
   \end{array}
\end{equation}

$a$ (semimajor axis), $e$ (eccentricity) and $t_0$ (epoch) are constants that
can be calculated from the initial state of the two bodies. $E(t)$ is the
eccentric anomaly angle. The period of $r(t)$ is $P = 2\pi \sqrt\frac{a^3}{\mu}$.

The initial values in the Sitnikov problem are:

\begin{itemize}
\item $a > 0$, $e \in (0..1)$, $\mu > 0$ --- parameters of the orbit of the two bodies;
\item $z_0 = z(0)$ --- initial position of the third body in the $Z$ axis.
\item $v_0 = \dot z(0)$ -- initial velocity of the third body in the $Z$ axis.
\item $\phi = E(0)$, $0 \leq \phi < 2 \pi$  --- initial value of the eccentric anomaly of
  the orbit of the two bodies.
  \end{itemize}
 
The state vector of the system is accordingly
$\xvec = (a,e,\mu,z,v,E)$. $a$, $e$ and $\mu$ do not depend on time;
$\dot z = v$; $\dot v = \ddot z$ from (\ref{sitnikov});
$\dot E$ follows from (\ref{twobody}):
\begin{equation}\label{sitnikov-xdot}
\begin{array}{rcl}
\dot \xvec & = & \fvec(\xvec) = (0, 0, 0, v, \ddot z, \dot E) \\
\ddot z & = & -\frac{2\mu z}{\sqrt{z^2+a^2(1 - e\cos E)^2}^3} \\
\dot E & = & \frac{\sqrt{\mu a}}{1 - e \cos E}
\end{array}
\end{equation}

\noindent \textbf{Statement.} IVP for the Sitnikov problem (\ref{sitnikov}) is
polynomially reducible to the IVP for the three-body problem (\ref{nbody}).

The study of the trajectories of $z(t)$ in this system was started by
Kolmogorov, while Sitnikov was the first to prove the existence of
the oscillatory motions in this system~\cite{sitnikov1960}. His proof was also
the first proof of this kind for three-body systems in general.

\begin{Thm}\label{sitnikov-lip} In the Sitnikov problem, there are no singularities,
  and the function $\fvec$ is Lipschitz-continuous on the whole domain.
\end{Thm}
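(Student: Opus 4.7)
The plan is to verify first that $\fvec$ is actually smooth on the stated phase space, then to establish non-singularity by an a priori bound on the escape rate, and finally to extract a uniform Lipschitz constant directly from the Jacobian.

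First I would observe that the only denominators appearing in $\fvec$ are $1-e\cos E$ in $\dot E$ and $(z^2+r(E)^2)^{3/2}$ in $\ddot z$, where $r(E):=a(1-e\cos E)$. Since $e\in(0,1)$, the bound $1-e\cos E\geq 1-e>0$ is uniform in $E$, so $r(E)\geq a(1-e)>0$ and both denominators are bounded away from zero on the whole domain. Consequently $\fvec$ is $C^\infty$ there, and in particular it has no interior singularities to hit.

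Second, for non-singularity I would argue that no component of a solution can escape in finite time. The parameters $a,e,\mu$ are constant along trajectories; $E$ satisfies $0<\dot E\leq\sqrt{\mu a}/(1-e)$, so it grows at most linearly. For $z$ and $v$ the crucial observation is that
$$|\ddot z|\;=\;\frac{2\mu|z|}{(z^2+r^2)^{3/2}}\;\leq\;\frac{2\mu}{[a(1-e)]^3}|z|\;=:\;K|z|.$$
Then the energy-like quantity $U(t):=z(t)^2+v(t)^2$ satisfies $\dot U=2zv+2v\ddot z$, hence $|\dot U|\leq(1+K)U$ by $2|ab|\leq a^2+b^2$, and Gr\"onwall's inequality yields $U(t)\leq U(0)\,e^{(1+K)|t|}$. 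Thus $z$ and $v$ stay finite on every compact time interval, and the solution extends to all of $\R$.

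Third, since $\fvec$ is smooth, global Lipschitz continuity reduces to a uniform bound on the Jacobian. The only nontrivial entries are $\partial_z\ddot z$, $\partial_E\ddot z$, and $\partial_E\dot E$. Differentiating (\ref{sitnikov-xdot}) and using $(z^2+r^2)^{3/2}\geq[a(1-e)]^3$ together with $|z|/(z^2+r^2)^{5/2}\leq(z^2+r^2)^{-2}\leq[a(1-e)]^{-4}$ yields explicit bounds in terms of $a,e,\mu$ alone; for example
$$|\partial_z\ddot z|\;=\;\frac{2\mu|r^2-2z^2|}{(z^2+r^2)^{5/2}}\;\leq\;\frac{4\mu}{(z^2+r^2)^{3/2}}\;\leq\;\frac{4\mu}{[a(1-e)]^3},$$
and the other two entries are handled analogously. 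A global Lipschitz constant then follows from the mean value theorem.

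The main subtle point is the scope of ``the whole domain''. The constants produced above degenerate as $a\to 0^+$ or $e\to 1^-$, so the theorem cannot mean a single Lipschitz constant valid for \emph{all} admissible $(a,e,\mu)$ simultaneously. The natural reading --- consistent with the vanishing of the first three components of $\fvec$ in (\ref{sitnikov-xdot}) --- is that $(a,e,\mu)$ are fixed positive parameters with $e<1$ and Lipschitz continuity is asserted on the fibre $\R^3_{(z,v,E)}$; under that reading the argument above goes through cleanly, and this interpretation-of-domain step is really the only conceptual point in the proof, the rest being routine estimates.
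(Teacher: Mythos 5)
Your proof is correct, and its core --- global Lipschitz continuity via a uniform bound on the Jacobian in the variables $(z,v,E)$ --- is the same as the paper's, which lists the nonzero partials $\partial\ddot z/\partial z$, $\partial\ddot z/\partial E$, $\partial\dot E/\partial E$ and argues boundedness from continuity plus decay as $z\to\pm\infty$. You differ in two ways, both to your credit. First, where the paper obtains global existence as a corollary of global Lipschitz continuity (via the Cauchy--Lipschitz theorem quoted in Section 2), you give an independent Gr\"onwall estimate $|\ddot z|\leq K|z|$, $U=z^2+v^2$, $U(t)\leq U(0)e^{(1+K)|t|}$; this is a valid self-contained route, though logically redundant once the Lipschitz bound is in hand. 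Second, your bounds are explicit ($|\partial_z\ddot z|\leq 4\mu/[a(1-e)]^3$, etc.), whereas the paper's ``tends to zero at infinity, hence bounded'' argument is terser and silently relies on periodicity in $E$ to get uniformity; your version is the more airtight of the two. Your closing remark on the scope of ``the whole domain'' is also well taken: the paper implicitly fixes $a,e,\mu$ (it omits the partials with respect to them, and constants such as $\sqrt{\mu a}\,e/(1-e)^2$ degenerate as $e\to1^-$), so the theorem must indeed be read fibrewise over fixed orbital parameters, exactly as you conclude.
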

\begin{proof}
From Eqs. (\ref{sitnikov}) and (\ref{twobody}), along with the fact that
$r(t) > 0$, instantly follows that $\fvec$ is defined and continuous with any
$z, v, E \in \R$.

Let us prove the Lipschitz-continuity of $\fvec$ by showing that all its
partial derivatives w.r.t. $\xvec$ are bounded. We write down those derivatives,
skipping the zero ones:
\begin{eqnarray}
\partial v / \partial v &=& 1 \label{partial1}\\
\partial \ddot z / \partial z &=& -2\mu\left(\frac{1}{w^3}-\frac{3z^2}{w^5}\right)\label{partial2}\\
\partial \ddot z / \partial E &=& -3\mu z\frac{2a^2(1-e\cos E)\sin E}{w^3}\label{partial3}\\
\partial \dot E / \partial E &=& -\sqrt{\mu a}\frac{e \sin E}{(1 - e \cos E)^2}\label{partial4}
\end{eqnarray}
\noindent (Notion $w=\sqrt{z^2+a^2(1 - e\cos E)^2}$ is used for brevity.)

It is evident that all those functions are defined and continuous for any $z,v,E\in\R$
(for (\ref{partial4}) it is important that $0<e<1$).
The boundedness of (\ref{partial1}) and (\ref{partial4}) is trivial.
The boundedness of (\ref{partial2}) follows from the fact that it approaches
zero as ${z\to\pm\infty}$: $\frac{1}{w^3}\to 0$ and $\frac{z^2}{w^5}\to 0$.
Similarly, (\ref{partial3}) is bounded because $\frac{z}{w^3}\to 0$ at ${z\to\pm\infty}$.
\end{proof}

Existence, uniqueness, and computability of the solution of the IVP for the Sitnikov
problem follow from Theorem \ref{sitnikov-lip} and the references given in Section \ref{turingsec}.

For the rest of the article, we consider the Sitnikov problem
with $z_0 = 0$, omitting the solutions where the third body never crosses the plane.

\subsection{Combinatorial properties of the solutions of the Sitnikov problem}
Sitnikov's result about the oscillatory motion was significantly extended
by Alexeyev, who not only discovered the existence of all the classes of final
motions in this problem, but also proved the following~\cite{alexeyev1,alexeyev2,alexeyev3}:
\begin{Thm}
  For any sufficiently small eccentricity $e > 0$ there exists an $m(e)$ such that
  for any double-infinite sequence $\{s_n\}_{n\in\Z}, s_n \geq m$
  there exists a solution $z(t)$ of the equation (\ref{sitnikov})
  whose roots satisfy the equation
\begin{equation}\label{sktau}
\left\lfloor \frac{\tau_{k+1}-\tau_k}{P}\right\rfloor = s_k,\ \forall k \in \Z.
\end{equation}
\end{Thm}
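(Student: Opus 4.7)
The plan is to reduce the statement to the symbolic dynamics of a Poincar\'e return map, and then to construct a Smale-horseshoe-like invariant set whose symbolic code coincides with the prescribed sequence $\{s_k\}$. I would work on the cross-section $\Sigma = \{z = 0\}$ of the reduced phase space with coordinates $(v, E \bmod 2\pi)$; since we have fixed $z_0 = 0$, every $\tau_k$ is a hit of the trajectory with $\Sigma$, and it suffices to control the next-crossing map $f : \Sigma \to \Sigma$, $(v_k, E_k) \mapsto (v_{k+1}, E_{k+1})$. The floor $\lfloor(\tau_{k+1}-\tau_k)/P\rfloor$ is a piecewise-constant function on $\Sigma$, so realising any prescribed sequence of its values along an $f$-orbit will give the desired $z(t)$.

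The geometric input is the pair of invariant ``parabolic'' manifolds attached to the equilibria at $z = \pm\infty$. In the unperturbed circular case $e = 0$ the Sitnikov equation has an energy integral, and its zero level consists of orbits that escape to $z = \pm\infty$ parabolically; those orbits form coinciding stable and unstable manifolds of infinity. For $0 < e \ll 1$ a Melnikov-type computation shows that these manifolds split transversally, the splitting being a small but positive function of $e$. Trajectories whose initial data lie just below the escape threshold return to $\Sigma$ after some number $n$ of full binary periods, and $n$ can be made arbitrarily large by approaching the threshold. A careful asymptotic analysis of $f$ near infinity then shows that the preimages $R_n \subset \Sigma$ of the level set $\{\lfloor(\tau_1-\tau_0)/P\rfloor = n\}$, for $n$ larger than some $m(e)$, are ``strips'' lying inside a common rectangle $R \subset \Sigma$, stretched across $R$ in one direction and uniformly contracted in the transverse direction.

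Given those estimates, the symbolic dynamics is standard: the Conley--Moser conditions on the family $\{R_n\}_{n \geq m(e)}$ imply that the invariant set $\Lambda = \bigcap_{k \in \Z} f^{-k}\bigl(\bigcup_{n \geq m(e)} R_n\bigr)$ is topologically conjugate to the full shift on bi-infinite sequences in $\{n \in \N : n \geq m(e)\}$, and the itinerary of each point records exactly the number of binary periods between consecutive crossings of $\Sigma$. Pulling back a given sequence $\{s_k\}_{k \in \Z}$ with $s_k \geq m(e)$ produces an orbit whose consecutive roots satisfy $\lfloor(\tau_{k+1}-\tau_k)/P\rfloor = s_k$, which is the desired solution.

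The main obstacle, and the heart of Alexeyev's original argument, is the quantitative analysis of $f$ near infinity: one must balance the logarithmic divergence of the transit time against the Melnikov splitting and show that the splitting dominates the transverse compression inside every strip. This is what forces $e$ to be small and pins down $m(e)$; once those analytic estimates are in hand, the symbolic-dynamics conclusion follows in the standard way.
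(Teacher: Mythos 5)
The paper does not prove this theorem at all: it is quoted as Alexeyev's result, with the proof delegated to the cited references \cite{alexeyev1,alexeyev2,alexeyev3} and to Moser's book \cite{moser}. So there is no in-paper argument to compare yours against; what you have written is an outline of exactly the Moser route that the paper points to (Poincar\'e section at $z=0$ with coordinates $(v, E \bmod 2\pi)$, splitting of the parabolic invariant manifolds of infinity for small $e>0$, an infinite family of strips $R_n$ satisfying Conley--Moser conditions, conjugacy to the full shift on the alphabet $\{n \geq m(e)\}$ with the symbol recording the number of binary periods between consecutive zeros). As a description of the known proof strategy this is accurate, including the correct identification of where $e$ small and the threshold $m(e)$ enter.

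However, as a proof it has a genuine gap, which you yourself flag: the entire analytic content is deferred. The step that cannot be waved through is precisely the verification of the stretching/contraction (cone) conditions for the strips $R_n$ as $n \to \infty$. These strips accumulate on the stable manifold of the ``periodic orbit at infinity,'' which is \emph{parabolic}, not hyperbolic, so the standard Smale--Birkhoff horseshoe machinery does not apply off the shelf; one needs the quantitative asymptotics of the return map near the escape threshold (logarithmic divergence of the transit time versus the size of the Melnikov splitting) to show that expansion along the strips dominates in the limit. Without those estimates the claim that $\Lambda = \bigcap_{k\in\Z} f^{-k}\bigl(\bigcup_{n\geq m(e)} R_n\bigr)$ is conjugate to the full shift is an assertion, not a conclusion. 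A complete write-up would either carry out that analysis (essentially reproducing Chapter III of \cite{moser}) or, as the paper does, simply cite the result; the intermediate position you have taken --- sketching the architecture while naming but not closing the hard step --- does not constitute a proof of the statement.
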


The shortened version of the original theorem is given, excluding the
finite and semi-infinite sequences. Alexeyev also proved a generalization
of his theorem to the case when the third body has a nonzero mass.
A simpler proof was later obtained by Moser~\cite{moser}.

In what follows, we restrict our analysis to $t\geq 0, k\geq 0$ ($\tau_0 = 0$).

\begin{Lem}\label{lem1} Let $C(T)$ be the set of (finite) sequences of the
  form $(s_1,\ldots, s_k)$, $s_i \geq m > 1,\ s_i\mod 2 = 0,\ m \mod 2 = 0$,
  for each of which \textsl{any} sequence
  $(\tau_0,\ldots,\tau_{k+1})$ satisfying (\ref{sktau}) lies in the interval $[0, T]$
  (i.e. $\tau_{k+1} \leq T$). $|C(T)|$ has an asymptotic lower bound exponential in $T$.
\end{Lem}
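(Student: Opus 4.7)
The plan is to translate the defining property of $C(T)$ into a simple arithmetic inequality on the sequence $(s_i)$, and then exhibit an exponentially large family of sequences satisfying that inequality by restricting to a two-letter alphabet.

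First, I would unwind the definition of $C(T)$. The floor relation $\lfloor(\tau_{i+1}-\tau_i)/P\rfloor=s_i$ is equivalent to $s_iP\le\tau_{i+1}-\tau_i<(s_i+1)P$, so with $\tau_0=0$ the value $\tau_{\text{last}}$ ranges over the half-open interval $\bigl[P\sum_i s_i,\,P\sum_i (s_i+1)\bigr)$, and every point of this interval is attainable by a suitable choice of gaps. In particular the supremum is not attained but can be approached arbitrarily closely. Consequently, the condition that \emph{every} admissible $\tau$-sequence lie in $[0,T]$ is equivalent to the arithmetic inequality
\[
  P\sum_{i}(s_i+1)\;\le\;T.
\]

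Second, I would restrict attention to sequences drawn from the two-letter alphabet $\{m,m+2\}$; both symbols are even and $\ge m$, so they are admissible. For a word of length $n$ over this alphabet we have $\sum(s_i+1)\le n(m+3)$. Setting $K:=\lfloor T/((m+3)P)\rfloor$, each of the $2^{K}$ binary words of length $K$ therefore automatically satisfies the bound above and hence lies in $C(T)$. This yields
\[
  |C(T)|\;\ge\;2^{\lfloor T/((m+3)P)\rfloor},
\]
which is $2^{\Omega(T)}$, as required.

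The only point requiring a small amount of care is the correct passage from the quantifier ``any admissible $\tau$-sequence lies in $[0,T]$'' to the sharp inequality $P\sum(s_i+1)\le T$, which hinges on the attainability of values near the supremum of the gap interval. Once this translation is in place, the exponential lower bound follows by the counting argument above and I do not anticipate a genuine obstacle; Alexeyev's theorem itself is not needed here since $C(T)$ is defined purely combinatorially in terms of the $(s_i)$.
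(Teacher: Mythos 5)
Your proof is correct, but it takes a genuinely different route from the paper. The paper argues by recursion on $T$ in steps of $(m+1)P$: each sequence in $C(T)$ is extended to a member of $C(T+(m+1)P)$ either by appending the symbol $m$ or by adding $2i$ ($0<i\le m/2$) to its last entry, giving $|C(T+(m+1)P)|\ge(m/2+1)\,|C(T)|$ and hence the bound $(m/2+1)^{T/((m+1)P)}$. You instead make the membership condition explicit --- $(s_1,\ldots,s_k)\in C(T)$ iff $P\sum_i(s_i+1)\le T$, via the observation that $\tau_{k+1}$ sweeps out exactly the half-open interval $\bigl[P\sum_i s_i,\,P\sum_i(s_i+1)\bigr)$ --- and then count binary words over $\{m,m+2\}$. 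Your version is more transparent: the quantifier ``any $\tau$-sequence'' is resolved into a clean arithmetic criterion, and the $2^{\lfloor T/((m+3)P)\rfloor}$ words are manifestly distinct, whereas the paper's extension map needs (and does not explicitly receive) a check that distinct parents yield distinct children, since increasing the last entry of two different sequences by different amounts can in principle produce the same result. The paper's recursion buys a larger base, $m/2+1$ versus your $2$, but for Theorem~\ref{main} only the fact that the bound is exponential in $T$ matters, so your weaker constant is harmless. You are also right that Alexeyev's theorem plays no role in this lemma; in the paper it enters only later, when the combinatorial sequences are matched to actual solutions.
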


\begin{proof}
  Obviously, $C((m+1)P)=1$. For some $T\geq(m+1)P$, let us consider the
  interval $[T, T + (m+1)P]$. Any sequence $(s_1,\ldots s_k)\in C(T)$ can be
  extended to a sequence from $C(T+(m+1)P)$ by the following ways:
  \begin{itemize}
    \item $(s_1,\ldots, s_k,m)\in C(T+(m+1)P)$
    \item $(s_1,\ldots, s_k+2i)\in C(T+(m+1)P),\ \forall 0<i\leq m/2$
  \end{itemize}
  Consequently, $|C(T+(m+1)P)| \geq (m/2+1)|C(T)|$,
  and that implies $|C(T)|\geq (m/2+1)^{\frac{T}{(m+1)P}}$ for sufficiently large $T$.
  If $m>0$, this bound is exponential in $T$.
  \end{proof}

\subsection{Computational complexity of the IVP for the Sitnikov problem}
We give two lemmas that describe important properties of $z(t)$.
The first lemma gives a lower bound of $|z(t)|$ between two roots separated
by a certain distance. In the proof of the lemma, the Sturm's comparison theorem
is used:

\begin{Thm}[Sturm's comparison theorem]
Consider two equations:
\begin{equation}\label{sturm-q}
\ddot x = - q(t) x
\end{equation}
and
\begin{equation}\label{sturm-Q}
\ddot x = - Q(t) x,
\end{equation}
where $q$ and $Q$ are continuous functions.
Let a nonzero solution of (\ref{sturm-q}) $x(t)$ has roots $a$ and $b$,
and $Q(t) > q(t)$ on $t\in[a,b]$. Then any solution of (\ref{sturm-Q})
has a root on $(a,b)$.
\end{Thm}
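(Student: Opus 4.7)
The plan is to use the classical Wronskian-comparison argument: form a bilinear expression in $x$ and $y$ whose derivative is forced to have a definite sign by the hypothesis $Q>q$, and then read off a contradiction from the boundary values.

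First I would make two normalizing reductions. If $a$ and $b$ are not consecutive zeros of $x$, I restrict to a subinterval whose endpoints are consecutive zeros of $x$; the hypothesis $Q>q$ still holds on this subinterval, and a root in the subinterval is a fortiori a root in $(a,b)$. By replacing $x$ with $-x$ if necessary, assume $x(t)>0$ on $(a,b)$. Uniqueness for the linear second-order ODE (\ref{sturm-q}) rules out $x(a)=\dot x(a)=0$, so $\dot x(a)>0$; similarly $\dot x(b)<0$. Then I argue by contradiction: suppose a solution $y$ of (\ref{sturm-Q}) has no zero on the open interval $(a,b)$. By the intermediate value theorem $y$ has constant sign on $(a,b)$, and after flipping sign I may take $y(t)>0$ there, hence $y(a)\geq 0$ and $y(b)\geq 0$ by continuity.

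Now introduce $W(t) := x(t)\dot y(t) - y(t)\dot x(t)$. Using the two ODEs,
\[
\dot W(t) = x\ddot y - y\ddot x = \bigl(q(t)-Q(t)\bigr)\,x(t)\,y(t),
\]
which is strictly negative on $(a,b)$ by the assumption $Q>q$ together with $xy>0$. Hence $W$ is strictly decreasing on $[a,b]$. On the other hand, $x(a)=x(b)=0$ gives $W(a) = -y(a)\dot x(a) \leq 0$ and $W(b) = -y(b)\dot x(b)\geq 0$, contradicting the strict decrease.

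The step I expect to require the most care is the endpoint bookkeeping. The conclusion asks for a root in the \emph{open} interval $(a,b)$, so the argument has to survive the possibility that $y$ vanishes at one or both of $a,b$; this is why I keep the non-strict inequalities $y(a),y(b)\geq 0$ and rely on the fact that strict monotonicity of $W$ on a closed interval is still incompatible with $W(a)\leq 0\leq W(b)$. The other minor subtlety is the reduction to consecutive zeros of $x$, which is needed so that $x$ has definite sign on $(a,b)$ and so that $\dot x(a),\dot x(b)$ are nonzero; everything else is a routine verification.
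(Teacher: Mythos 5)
Your proof is correct. Note, however, that the paper does not prove this statement at all: Sturm's comparison theorem is quoted there as a classical black-box result, used only as a tool in the proof of Lemma~2, so there is no in-paper argument to compare against. Your Wronskian argument is the standard one and is sound: the reduction to consecutive zeros of $x$ (legitimate because zeros of a nontrivial solution are isolated, by uniqueness), the sign determination $\dot x(a)>0$, $\dot x(b)<0$, the identity $\dot W=(q-Q)xy<0$ on $(a,b)$, and the resulting incompatibility of strict decrease of $W$ with $W(a)\leq 0\leq W(b)$ are all handled correctly, including the delicate case where $y$ vanishes at an endpoint.
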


\begin{Lem}\label{lem2}
  Let $z^*(t)$ be a solution of the Sitnikov problem (\ref{sitnikov}) with
  initial values $a, e, \mu, \phi, v_0$. According to the previous assumptions,
  let $z^*(0) = 0$. To be specific, we consider $v_0>0$ (the case of negative
  $v_0$ is a mirroring of that). Let $\tau$ be the smallest positive root of  $z^*$.
  Then $\exists t \in (0, \tau) : z^*(t) \geq h$, where
\begin{equation}\label{hdef}
h = H(\tau) = \sqrt{\left(\frac{2\mu\tau^2}{\pi^2}\right)^{\frac{2}{3}}-a^2}
\end{equation}
\end{Lem}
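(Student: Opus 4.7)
The plan is to argue by contradiction via Sturm's comparison theorem. Assume for contradiction that $z^*(t) < h$ for every $t \in (0,\tau)$. Along the trajectory $z^*$, the Sitnikov equation (\ref{sitnikov}) is the linear equation $\ddot z^* = -Q(t)\,z^*$ with coefficient $Q(t) = 2\mu/\bigl((z^*(t))^2 + r(t)^2\bigr)^{3/2}$. For the comparison equation I would take the constant-coefficient equation $\ddot x = -q_0 x$ with $q_0 = \pi^2/\tau^2$, since its solution $x(t) = \sin(\pi t/\tau)$ has consecutive roots exactly at $0$ and $\tau$ and nowhere in between.

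The definition (\ref{hdef}) of $h$ is engineered precisely so that $2\mu/(h^2+a^2)^{3/2} = q_0$. Combining the contrary hypothesis $z^*(t) < h$ with the bound $r(t) \le a$ yields $(z^*(t))^2 + r(t)^2 < h^2 + a^2$ throughout $(0,\tau)$, and hence $Q(t) > q_0$ there. Sturm's theorem then forces $z^*$, which solves the larger-coefficient equation, to have a root strictly inside $(0,\tau)$, contradicting the minimality of $\tau$. This contradiction establishes the existence of some $t \in (0,\tau)$ with $z^*(t) \ge h$.

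The step I expect to be the main obstacle is the bound $r(t) \le a$ used above: since $r(t) = a(1 - e\cos E(t))$ takes values in $[a(1-e),\,a(1+e)]$, this inequality fails on part of each orbital period. In the small-eccentricity regime singled out by Alexeyev's theorem the overshoot is minor and can be absorbed into a slightly enlarged constant, but a fully rigorous treatment would either replace $a^2$ by $a^2(1+e)^2$ in (\ref{hdef}), or keep (\ref{hdef}) as stated and work with the time-dependent comparison coefficient $q(t) = 2\mu/\bigl(h^2+r(t)^2\bigr)^{3/2}$, invoking a Hill-type variant of Sturm with a test function tuned to that $q(t)$. The global skeleton of the argument—contradiction plus Sturm comparison against a half-wavelength sine—remains the same in either approach.
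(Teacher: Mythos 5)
Your argument is essentially identical to the paper's proof: assume $z^*(t)<h$ on all of $[0,\tau]$, view the Sitnikov equation along the trajectory as the linear equation $\ddot z=-Q(t)z$, bound $Q(t)$ from below by the constant $q=2\mu/\sqrt{h^2+a^2}^3=\pi^2/\tau^2$, and apply Sturm's comparison against $\sin(\sqrt q\,t)$ to force a root of $z^*$ strictly inside $(0,\tau)$, contradicting the minimality of $\tau$. The obstacle you flag is genuine and is not addressed in the paper either: the proof in the text simply asserts $r(t)\le a$, whereas $r(t)=a(1-e\cos E(t))$ reaches $a(1+e)>a$ at apoapsis, so the stated bound fails on part of each orbital period. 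Your first proposed repair --- replacing $a^2$ by $a^2(1+e)^2$ in (\ref{hdef}) --- is the right one: since $e<1$ this only shrinks $h$ by a bounded factor, and with $P=2\pi\sqrt{a^3/\mu}$ one checks that $h=H(mP)$ satisfies $h^2\ge 4a^2(m^{4/3}-1)>0$ for $m\ge 2$, so the later use of $h$ in Theorem \ref{main}, which only requires some positive computable threshold, is unaffected.
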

\begin{proof}[Proof by contradiction]
  Suppose $z^{*}(t)<h$, $0\leq t \leq\tau$.
  Since $z^*$ is the solution of (\ref{sitnikov}), then it is also the solution
  of the following equation:
  \begin{equation}\label{ddotzstar}
  \ddot z = -\frac{2 \mu z}{\sqrt{{z^*(t)}^2 + r(t)^2}^3},
  \end{equation}
  where the factor of $z$ depends only on $t$, but not on $z$. Let us denote this factor $Q(t)$:
  \begin{equation}\label{zQ}
    \ddot z = - Q(t) z.
  \end{equation}
  Since $z^{*}(t)<h$ by the assumption, and $r(t)\leq a$, then
  $$Q(t) > \frac{2 \mu}{\sqrt{h^2 + a^2}^3}$$
  Denoting
\begin{equation}\label{qdef}
  q=2 \mu / \sqrt{h^2 + a^2}^3,
\end{equation}
  we write a differential equation
  \begin{equation}\label{zq}
    \ddot z = - q z.
  \end{equation}
  Since $q>0$ the equation (\ref{zq}) is the equation of a harmonic oscillator.
  We examine its solution $z^{**}$ for initial conditions $z(0) = 0, \dot z(0) = v_0$:
  $$z^{**}(t) = v_0 \sin(\sqrt q t)$$
  By the Sturm's comparison theorem, between two roots of
  $z^{**}$---0 and $\pi/\sqrt{q}$---there exist roots of any solution of 
  (\ref{zQ}), including $z^{*}$.
  Since $\tau$ was chosen as the smallest positive root of $z^{*}$, it must be that
  $\tau<\pi/\sqrt{q}$.
  However, by construction of $q$ (\ref{qdef}) and $h$ (\ref{hdef})
  it follows $\tau=\pi/\sqrt{q}$, hence the contradiction.
  \end{proof}

\begin{Lem}\label{lem3}
Consider a nonnegative function $z(t)$, continuous and convex on $[t_1, t_2]$; let $z(t_1) = z(t_2) = 0$;
let at some $t\in[t_1,t_2]$ $z(t)>h>0$. Then
$\exists t_a, t_b \in [t_1, t_2] : (t_b - t_a) > \frac{3}{4}(t_2 - t_1), \forall t\in (t_a, t_b)\ z(t) > h/4$.
\end{Lem}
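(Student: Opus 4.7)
The plan is to bound $z$ from below by the piecewise-linear ``tent'' formed from the two chords joining the endpoints $(t_1,0)$ and $(t_2,0)$ to the apex of the graph, and to read off the claimed interval directly from that bound. Note that the word ``convex'' in the statement must be read as \emph{concave}: a nonnegative convex function vanishing at both endpoints is identically zero, contradicting $z(t)>h$. This is also consistent with the intended application, since between consecutive zeros of $z^{*}$ in Lemma~\ref{lem2} one has $\ddot z^{*}=-Q(t)z^{*}<0$ with $z^{*}\ge 0$.

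First I would let $M:=\max_{[t_1,t_2]} z$ and pick $t_M\in(t_1,t_2)$ with $z(t_M)=M$; the hypothesis $z(t)>h$ at some point forces $M>h$. Concavity applied to the three points $(t_1,0)$, $(t_M,M)$, $(t_2,0)$ gives the chord lower bounds
\begin{equation*}
z(t)\ge M\,\frac{t-t_1}{t_M-t_1}\ \text{on } [t_1,t_M], \qquad
z(t)\ge M\,\frac{t_2-t}{t_2-t_M}\ \text{on } [t_M,t_2].
\end{equation*}

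Next I would set $t_a:=t_1+\tfrac{h}{4M}(t_M-t_1)$ and $t_b:=t_2-\tfrac{h}{4M}(t_2-t_M)$, chosen so that the left chord takes the value $h/4$ exactly at $t_a$ and the right chord takes the value $h/4$ exactly at $t_b$. The two chord estimates then yield $z(t)>h/4$ throughout $(t_a,t_b)$ (the apex included, since $z(t_M)=M>h$), and a one-line computation gives
\begin{equation*}
t_b-t_a = (t_2-t_1)\left(1-\frac{h}{4M}\right) > \tfrac{3}{4}(t_2-t_1),
\end{equation*}
the strict inequality being exactly the $M>h$ coming from the hypothesis. The only genuine ingredient is the chord lower bound, which is nothing but the defining inequality of concavity, so I do not anticipate any real obstacle; the main conceptual check is simply reinterpreting the word ``convex'' correctly.
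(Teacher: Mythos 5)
Your proof is correct and is essentially the paper's own argument: your chord lower bounds and the points $t_a$, $t_b$ where the chords reach $h/4$ are exactly the paper's triangle $ABC$ and its intersection points $D$, $E$ with the line $z=h/4$, and your computation $t_b-t_a=(t_2-t_1)\bigl(1-\tfrac{h}{4M}\bigr)>\tfrac34(t_2-t_1)$ is precisely the paper's similar-triangles step. Your reading of ``convex'' as concave (convex upward) also matches the paper's intent, since its proof explicitly uses that the graph lies \emph{above} the chords.
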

\begin{proof}
  $z(t)$ has one (strict) maximum at $(t_1,t_2)$, let us say that $t_3$ is the point where
  the maximum is reached. Let us place points (Fig.~\ref{fig-zconvex}): A$(t_1, 0)$, B$(t_3, z(t_3))$, C$(t_2, 0)$. Let the line $z=h/4$ cross $AB$ at point $D$ and $BC$ at point $E$.
  Similarly, let the same line cross the $z(t)$ curve at $F$ and $G$.

\begin{figure}[h]
  \centering
  \caption{Example for Lemma \ref{lem3}}
  \includegraphics[width=0.8\textwidth]{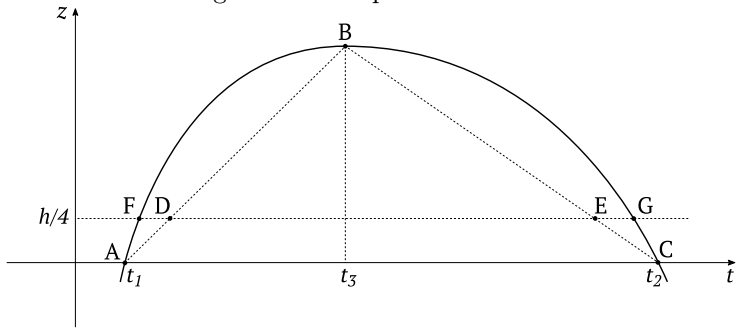}
  \label{fig-zconvex}
\end{figure}
  
Since $z(t)$ is convex, it lies above ABC, with the exception of A, B and C
themselves (Fig. \ref{fig-zconvex}). Consequently,
$|\textrm{DE}| < |\textrm{FG}|$. At the same time, from the similarity
of triangles it follows that 
  $\frac{|\textrm{DE}|}{|\textrm{AC}|} = 1-\frac{h/4}{z(t_3)}$. Since $z(t_3) > h$
and $|\textrm{AC}| = (t_2 - t_1)$, we get $|\textrm{FG}| > \frac{3}{4}(t_2 - t_1)$.
The horizontal coordinates of $F$ and $G$ are the desired $t_a$ and $t_b$.
\end{proof}

\begin{Thm}\label{main}
  The time complexity of an initial value problem for the Sitnikov problem
  with any fixed value of eccentricity does not have a polynomial upper bound.
\end{Thm}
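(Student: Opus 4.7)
My plan is to prove Theorem \ref{main} by contradiction, combining a pigeonhole argument on initial velocities with the amplitude estimates from Lemmas \ref{lem2} and \ref{lem3}. Assume there is a Turing machine $M$ as in Definition \ref{turing} that computes the Sitnikov IVP in time $\P(\LENGTH(t),\LENGTH(\varepsilon))$; I shall exhibit, for every sufficiently large $T$, an input on which $M$ must err. Fix rational $a,\mu,\phi$, take the prescribed eccentricity $e$ with corresponding even integer $m=m(e)\ge 2$ from Alexeyev's theorem, and let $\varepsilon_0$ be a rational slightly smaller than $H(mP)/16$, so $\LENGTH(\varepsilon_0)=O(1)$. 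Each $s\in C(T)$ (Lemma \ref{lem1}) yields a distinct initial velocity $v_0^{(s)}\in[v_{\min},v_{\max}]$ (bounded since all admissible orbits have sub-escape energy). Since $|C(T)|\ge N(T):=(m/2+1)^{T/((m+1)P)}$ is exponential in $T$, the pigeonhole principle furnishes distinct $s\ne s'$ with $|v_0^{(s)}-v_0^{(s')}|\le C\cdot 2^{-\alpha T}$, where $\alpha:=\log_2(m/2+1)/((m+1)P)>0$.

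Next I select a rational time $t^*\le T+O(P)$ at which the two trajectories are separated by a universal constant. Let $k$ be the first index where $s$ and $s'$ disagree; assume $s_k<s_k'$. By continuous dependence on initial conditions together with uniqueness (Theorem \ref{sitnikov-lip}), $z^{(s)}$ and $z^{(s')}$ remain very close on $[0,\tau_k^{(s)}]$, so their roots match up to position $k$; beyond $\tau_k^{(s)}$, trajectory $z^{(s)}$ has entered its $(k+1)$-th root-free interval with the opposite sign, while $z^{(s')}$ is still in the prior interval with the original sign. Choose a rational $t^*$ inside this opposite-sign window (of length $\ge(s_k'-s_k)P\ge 2P$) that sits in the middle three-quarters of both the $s$-interval (length $\ge mP$) and the $s'$-interval (length $\ge(m+s_k'-s_k)P$); Lemmas \ref{lem2} and \ref{lem3} then give $|z^{(s)}(t^*)|,|z^{(s')}(t^*)|\ge H(mP)/4$ with opposite signs, hence
\[
  |z^{(s)}(t^*)-z^{(s')}(t^*)| \;\ge\; H(mP)/2 \;>\; 2\varepsilon_0.
\]

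Finally, the polynomial-time assumption forces $M$ on input $(t^*,\varepsilon_0)$ to run in time $\P(\LENGTH(t^*),O(1))=\mathrm{poly}(\log T)$, and therefore to read at most $\mathrm{poly}(\log T)$ bits of the oracle. For $T$ large enough, $\mathrm{poly}(\log T)<\alpha T$, so the separation $|v_0^{(s)}-v_0^{(s')}|\le 2^{-\alpha T}$ is below $M$'s resolution; every oracle query returns the same rational approximation for both inputs, and $M$ produces a common output $\bar z$. Correctness then forces $|\bar z-z^{(s)}(t^*)|\le\varepsilon_0$ and $|\bar z-z^{(s')}(t^*)|\le\varepsilon_0$, so $|z^{(s)}(t^*)-z^{(s')}(t^*)|\le 2\varepsilon_0$, contradicting the previous bound.

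The principal obstacle is the second step: the pigeonhole provides a pair $(s,s')$ over which we have essentially no combinatorial control, yet we must name in advance a single rational $t^*$ at which the pair is robustly separated. Two delicate points are (i) placing $t^*$ in the middle three-quarters of both relevant root-free intervals despite possibly small $s_k'-s_k$, and (ii) controlling the shadowing error between $z^{(s)}$ and $z^{(s')}$ on $[0,\tau_k^{(s)}]$ tightly enough that the root times effectively coincide at position $k$. A natural remedy is to pre-restrict $C(T)$ to a sub-family on which these conditions hold automatically; the exponential size of $|C(T)|$ leaves ample room for such restrictions without sacrificing the exponential gap $2^{-\alpha T}$ that drives the contradiction.
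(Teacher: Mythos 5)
Your proposal shares the paper's central counting idea (exponentially many combinatorially distinct trajectories versus polynomially many oracle digits), but the pairwise pigeonhole you route it through introduces two genuine gaps. First, the step ``every oracle query returns the same rational approximation for both inputs'' does not follow from $|v_0^{(s)}-v_0^{(s')}|\le 2^{-\alpha T}$: two reals that are exponentially close can still differ in their very first digit (the carry problem), so closeness in value gives no control over the oracle's answers. Second, and more seriously, your selection of the separating time $t^*$ rests on the claim that $z^{(s)}$ and $z^{(s')}$ shadow each other closely enough on $[0,\tau_k^{(s)}]$ that their first $k$ roots coincide. The only quantitative tool available is Gronwall/Lipschitz continuity (Theorem \ref{sitnikov-lip}), which bounds the divergence by $2^{-\alpha T}e^{LT}$; since $\alpha=\log_2(m/2+1)/((m+1)P)$ is a fixed constant with no relation to the Lipschitz constant $L$, this product can blow up, and the roots need not track each other at all. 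You flag this as ``the principal obstacle'' and suggest pre-restricting $C(T)$, but that restriction is exactly the missing content: without it you cannot name a rational $t^*$ at which the chosen pair is robustly separated, and it is not clear any sub-family of $C(T)$ of exponential size has the required shadowing property.

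The paper's proof avoids both difficulties by never isolating a pair or a single time. It runs the hypothetical machine $M$ on a uniform grid of $\lfloor T/\delta\rfloor$ rational times with a fixed precision $\varepsilon<H(mP)/4$, uses Lemmas \ref{lem2} and \ref{lem3} to show that the signs of the outputs at the grid nodes determine the entire sequence $(s_1,\ldots,s_k)$, and then observes that the whole vector of outputs is a function of the first $\P(\log_2 T,l)$ oracle digits, so at most $2^{\P(\log_2 T,l)}$ distinct sequences can ever be recovered --- contradicting the lower bound $(m/2+1)^{T/((m+1)P)}$ of Lemma \ref{lem1}. In effect the paper pigeonholes on oracle-answer prefixes rather than on distances between velocities, which is the repair your argument needs; if you replace your metric pigeonhole by ``two sequences in $C(T)$ whose initial conditions elicit identical oracle prefixes of length $\P(\log_2 T,l)$'' and replace the single time $t^*$ by the full grid, your argument collapses into the paper's.
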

\begin{proof}[Proof by contradiction]
  Suppose that there exists a Turing machine $M$ that calculates the solution
  function of the IVP for the Sitnikov problem in time
  $\P(\LENGTH(t), \LENGTH(\varepsilon))$, where $\P$ is arbitrary polynomial.
  
  We examine the solutions at the interval $t\in[0, T],\ T\in\N$.
  From Lemma \ref{lem1} and Alexeyev's theorem, the number $C(T)$ of different
  solutions $z(t)$, forming different sequences
   $(s_1,\ldots,s_k)$ with $s_k \mod 2 = 0, s_k \geq m\ (m \mod 2 = 0)$,
  has a lower bound of $(m/2+1)^\frac{T}{(m+1)P}$,
  where $m$ depends only on $e$. (The Alexeyev's theorem allows zero and odd $m$,
  but we can round the $m$ up to be a nonzero even number, without trouble to
  the theorem.

  We build an algorithm for \textsl{recovery} of the sequence $(s_1,\ldots,s_k)$
  that corresponds to a solution $z(t)$ for some initial values, using our supposedly
  existing Turing machine $M$. We choose the parameters $\delta\in\Q, \delta < mP/2$
  and $\varepsilon = 2^{-l} (l\in\N), \varepsilon < h/4$, where $h = H(mP)$. (Note
  that $P$ is a computable real number.) 
  
  Let us build on $[0, T]$ a uniform grid with a step $\delta$;
  on each node $\{t_i = i\delta, 0 < i \leq \lfloor T/\delta\rfloor\}$
  we can compute the state of the system up to the precision $\varepsilon$.
  The grid has the following important properties:
  \begin{itemize}
  \item If $|z(t_i)| > h/4$, then from Lemma \ref{lem3} follows that the closest
    root to $t_i$ lies no farther than  $mP/4$.
    \item From above it follows that two neighbor nodes can not both have
      $|z|<h/4$
    \item Calculated $z(t_i)$ can be divided into three classes:
      positive ($z > 0$ for sure), negative ($z < 0$ for sure)
      and undefined (the sign of $z$ is not determined within the given precision).
    \item Positive and negative nodes can go any number in a row,
      while there can be only one undefined node in a row.
    \item From the estimate of the distance between roots, it is evident that
      if there are no nodes between a positive node and a negative node,
      or if there is (one) undefined node, then $z$ has exactly one root
      in between.
  \end{itemize}
  Given that the $s_k$ are even, it is easily seen that  $p$ nodes in a row of the same sign
  correspond to $s_k = \lceil (p+1)/2 \rceil$; undefined nodes do not correspond to any
  $s_k$.

  It is not important how long it took to recover the sequence of $s_k$.
  What matters is that all the ``calls'' to out Turing machine $M$
  have used \textsl{the same oracle} for the computation of the (same) initial state.
  But, as we supposed, $M$ did not have a chance to read more than
  $\P(\LENGTH(t_i), \LENGTH(\varepsilon))$ digits from the oracle tape for any $t_i$,
  which is no more than $P(\log_2 T, l)$; hence, basing on what it had read,
  it can possibly generate no more than $2^{\P(\log_2 T, l)}$ different outcomes.
  At the same time, we proved that our algorithm recovers any of at least 
  $(m/2+1)^{T/((m+1)P)}$ sequences, which (as $m>0$) is not bounded by the said polynomial.
  \end{proof}

\section{Conclusion and future work}
In this work we examined the theoretical complexity of the initial value
problem. We have shown that the lower time bound of that complexity
can not be polynomial for the three-body problem (instantly meaning the
absence of such a bound for the $N$-body problem). The choice of
the three-body problem and oscillatory trajectories is not principal.
We believe that similar results can be obtained in other systems,
where, with the help of methods of symbolic dynamics, complex
dynamical behavior can be shown and analyzed. We already mentioned
homoclinic trajectories, discovered by Poincar\'e for the three
body problem. It seems appropriate to quote his work ``New methods
of celestial mechanics''~\cite{poincare} here:

``One is struck by the complexity of this figure I am not even
attempting to draw. Nothing can give us a better idea of the
complexity of the three-body problem and of all problems of dynamics
where there is no holomorphic integral and Bohlin’s series diverge.''

On a different note, for the integrable dynamical systems---those who
have computable integrals of motion with good complexity bounds in $t$ and $\varepsilon$---
it is possible to derive complexity bounds for the initial value problem
in our formal statement. Those bounds will be polynomial by
$\log(t)$ and $\log(1/\varepsilon)$. That can point to a link between
computational complexity of the IVP and integrability.

On another different note, in this work the computational complexity
of the IVP is examined at the ``macro level'' (rational $t\to\infty$),
but what is left aside is the ``micro level'' (real $t$), where
the precision of $t$ plays an important role~\cite{kawamura}.
Another work is planned devoted to that case.


\begin{thebibliography}{99}

\bibitem{kawamura}
Akitoshi Kawamura, Hiroyuki Ota, Carsten Rösnick, Martin Ziegler.
\textit{Computational Complexity of Smooth Differential Equations.}
In: Branislav Rovan, Vladimiro Sassone, Peter Widmayer (Eds.) 
Lecture Notes in Computer Science 7464: Mathematical Foundations of Computer Science, Springer-Verlag, 2012, 578--589.

\bibitem{reif}
J. H. Reif, S. R. Tate. \textit{The Complexity of N-body Simulation}.
In: Proceedings of the 20th International Colloquium on Automata, Languages and Programming
(ICALP '93), Springer-Verlag, London, 1993, 162--176.

\bibitem{alexeyev1}
V. M. Alekseev. \textit{Quasirandom dynamical systems. I. Quasirandom diffeomorphisms.} Mathematics of the USSR-Sbornik(1968), 5(1):73.

\bibitem{alexeyev2}
V. M. Alekseev. \textit{Quasirandom dynamical systems. II. One-dimensional nonlinear oscillations in a field with periodic perturbation.}
Mathematics of the USSR-Sbornik(1968),6(4):505.

\bibitem{alexeyev3}
V. M. Alekseev. \textit{Quasirandom dynamical systems. III.  Quasirandom oscillations of one-dimensional oscillators.} Mathematics of the USSR-Sbornik(1969),7(1):1.

\bibitem{alexeyev}
V. M. Alexeyev. \textit{Final motions in the three-body problem and symbolic dynamics.}
Russian Mathematical Surveys, Volume 36, Number 4, 1981, 181--200.

\bibitem{burke}
James V. Burke, \textit{Ordinary Differential Equations. Existence and Uniqueness Theory.} In: Math 555 Course Notes (Linear Analysis), University of Washington, 2015. URL: \url{www.math.washington.edu/~burke/crs/555/555_notes/exist.pdf}.

\bibitem{collins}
Peter Collins, Daniel S. Gra\c{c}a. \textit{Effective Computability of Solutions of Ordinary Differential Equations. The Thousand Monkeys Approach.} Electronic Notes in Theoretical Computer Science 221(25), 2008, 103--114.

\bibitem{repin}
Svetlana Matculevich, Pekka Neittaanm\"aki, Sergey Repin. 
\textit{Guaranteed Error Bounds for a Class of Picard-Lindel\"of Iteration Methods.}
In: Sergey Repin, Timo Tiihonen, Tero Tuovinen (Eds.) Computational Methods in Applied Sciences Vol. 27: Numerical Methods for Differential Equations, Optimization, and Technological Problems, Springer Netherlands, 2013, 175--189.

\bibitem{babenko}
К. И. Бабенко. \textit{Основы численного анализа}. Москва-Ижевск:
НИЦ <<Регулярная и хаотическая динамика>>, 2002.

\bibitem{beloriszky}
M.\,D.\,Belorizky.
\textit{Application pratique des m\'ethodes de M. Sundman à un cas particulier du probl\`eme des trois corps.}
Bulletin Astronomique 6 (2), 1930, 417--434.

\bibitem{merman}
Г.\,А.\,Мерман. Бюллетень Института теоретической астрономии АН СССР. № 10(83), 713, 1958.

\bibitem{sitnikov1960}
K. A. Sitnikov. \textit{The Existence of Oscillatory Motions in the Three-Body Problem.}
Soviet Physics Doklady, Vol. 5, p. 647 (1961)

\bibitem{moser}
J\"urgen Moser. \textit{Stable and Random Motions in Dynamical Systems with Special Emphasis on Celestial Mechanics.}
Princeton University Press, 1973.

\bibitem{poincare}
  H. Poincar\'e. \textit{Les m\'ethodes nouvelles de la m\'ecanique c\'eleste}, volume 2. Paris: Gauthier-Villars, 1892.
  
\end{thebibliography}
\end{document}